\newtheorem{theorem}{Theorem}[section]
\newtheorem{cor}[theorem]{Corollary}
\theoremstyle{definition}
\theoremstyle{definition}
\newcommand{\R}{\mathbb{R}}
\newcommand{\fref}[1]{Fig.~\ref{#1}}
\newcommand{\Fref}[1]{Figure~\ref{#1}}
\newcommand{\eref}[1]{(\ref{#1})}
\newcommand{\Sref}[1]{Section~\ref{#1}}
\newcommand{\sref}[1]{Sec.~\ref{#1}}
\newcommand{\lp}{\left}
\newcommand{\rp}{\right}
\begin{document}

\preprint{APS/123-QED}

\title{Bifurcations and multistability in empirical mutualistic networks }

\author{Andrus Giraldo}%
\email{agiraldo@kias.re.kr}
\author{Deok-Sun Lee}
\email{deoksunlee@kias.re.kr}
\affiliation{School of Computational Sciences, Korea Institute for Advanced Study, Seoul 02455, Korea}

\date{\today}

\begin{abstract}
Individual species may experience diverse outcomes, from prosperity to extinction, in an ecological community subject to external and internal variations. Despite the wealth of theoretical results derived from random matrix ensembles, a theoretical framework still remains to be developed to understand species-level dynamical heterogeneity within a given community, hampering real-world ecosystems' theoretical assessment and management. Here, we consider empirical plant-pollinator mutualistic networks,  additionally including all-to-all intragroup competition, where species abundance evolves under a Lotka-Volterra-type equation. Setting the strengths of competition and mutualism to be uniform, we investigate how individual species persist or go extinct under varying the interaction strengths. By employing bifurcation theory in tandem with numerical continuation, we elucidate transcritical bifurcations underlying species extinction and demonstrate that the Hopf bifurcation of unfeasible equilibria and degenerate transcritical bifurcations give rise to multistability, i.e., the coexistence of multiple attracting feasible equilibria. These bifurcations allow us to partition the parameter space into different regimes, each with distinct sets of extinct species,  offering insights into how interspecific interactions generate one or multiple extinction scenarios within an ecological network. 
\end{abstract}

\maketitle

\newpage

\section{Introduction} \label{sec:Int}

The stability of biodiversity and species abundance distribution in each ecological community - whether maintaining current states or undergoing abrupt changes, including the extinction of some or almost all species - is a crucial question in ecological conservation efforts and also has been of central interest to theoretical approaches in ecology~\cite{May1972,ROBERTS:1974aa, Takeuchi1996}. Insights have been gained from studying the properties of equilibrium states in an ensemble of fully connected communities with random interaction strength, utilizing tools e.g., from the spin glass physics and random matrix theory~\cite{HRieger_1989,Allesina2012,*Allesina2015,Bunin2017,GarciaLorenzana2022,Gibbs2018,Mambuca2022}. Yet, real-world communities are often sparsely connected and heterogeneously structured~\cite{bascompte2003nested, thebault2010stability, suweis2013emergence,PhysRevLett.108.108701,stone2016google}, demanding further investigations to adapt these theoretical frameworks to empirical features~\cite{park2024incorporating,poley2024interaction}. Moreover, while ensemble-averaged properties provide broad insights, they often fall short of predicting specific future scenarios for individual communities, which are crucial for effective assessment and management. This gap underscores the importance of developing new frameworks that are better suited to understand and predict the dynamics of specific communities, based on their actual structures.

Under diverse environmental and internal perturbations, individual species can experience variations in their abundances or face extinction, and various theoretical approaches have been proposed to explore these dynamics~\cite{MacMahon1982, PRL2006Assaf, volkov2003neutral,suweis2023generalized}. Simple dynamical models, such as Lotka-Volterra(LV)-type equations, have been widely used because they effectively capture the evolution of species abundance under nonlinear interspecific interactions~\cite{DeOca1996,Baigent2017,Goh1979, Smale1976,Takeuchi1996,Zeeman1995,Zeeman2002,Zeeman2002a}. Recent empirical data-sets from real-world ecological networks~\cite{webOfLife,montoya2002small,dunne2002food,pascual2006ecological} have spurred investigations into such model dynamics on these networks, offering insights into the interplay between structure and dynamics~\cite{bastolla2009architecture, rohr2014structural, Maeng_2019,Lee2022}. 

Identifying equilibria, periodic solutions, and their stability from the nonlinear dynamics in specific real-world networks can enable predictions of possible specific scenarios for their biodiversity and species composition~\cite{marcus2024extinctions}. This may be a key next-generation goal in the field of nonlinear dynamics, especially in the era of big data. However, it can be challenging because real-world systems are typically large and their dynamics are high dimensional,  making mathematical techniques of lower dimensional systems harder to apply\cite{Aguirre2021, https://doi.org/10.1111/ele.14413,strogatz2018nonlinear}. To proceed, various approximate methods have been attempted, such as approximating the interspecific interaction matrix by a low-rank matrix to gain insights into the attracting equilibrium, assumed to be unique,  in specific network structures~\cite{stone2016google,Lee2022}. However, even small errors in these predictions can lead to disastrous results when applied to real-world ecosystems, making a methodology that is as exact as possible highly desirable. 

Given this background, we apply established methods from bifurcation theory \cite{KuznetsovBook2004} and numerical continuation~\cite{Doedel2007} to
investigate the equilibria that are both attracting and feasible, with the latter meaning no negative abundance, and how they vary with system parameters in empirical plant-pollinator networks governed by a LV-type equation. While keeping the real-world sparse and heterogeneous structure of mutualistic partnership between two groups - plant and pollinator - of species~\cite{bascompte2003nested, thebault2010stability, suweis2013emergence,PhysRevLett.108.108701,stone2016google}, we assume uniform mutualism strength~\cite{ROBERTS:1974aa,stone2016google,Lee2022} and all-to-all uniform intragroup competition that may arise from limited resources~\cite{suweis2013emergence,Pascual-Garcia:2017wj}.

We find that among exponentially many equilibria, only one is both feasible and attracting unless interactions are sufficiently strong; As interaction strengths increase, different equilibria successively become the only globally attracting equilibrium through \textit{transcritical bifurcations} each featuring no or nonempty sets of extinct species. When interactions are sufficiently strong, different initial conditions can lead to different equilibria, so a species may survive in one but be extinct in another equilibrium. Our study reveals that such multistability arises from two distinct types of bifurcations: \textit{Hopf bifurcation} and \textit{degenerate transcritical bifurcation}. While feasible equilibria cannot exhibit Hopf bifurcations in our studied networks due to them having symmetric interaction matrices ~\cite{MacArthur1970,PhysRevLett.130.257401}, our study demonstrates that the Hopf bifurcation of an unfeasible equilibrium, followed by a transcritical bifurcation, can lead to the creation of a new feasible attracting equilibrium. Also, degenerate transcritical bifurcations are shown to create attracting equilibria with different components in a particular number of surviving species from the original equilibrium, resulting in similar biodiversity scenarios. In contrast, Hopf bifurcations can lead to vastly different biodiversity scenarios. Our study thus elucidates the geometrical mechanisms underlying species extinction and multistability in empirical mutualistic networks. By identifying these bifurcations, we determine the exact parameter regimes that display distinct sets of surviving species. This can provide a platform for developing strategies  to preserve and control biodiversity in real ecological communities. 

The paper is organized as follows. In \sref{set:Model}, we introduce the model 
and the persistence diagrams illustrating the survival and extinction of individual species and multistability in the empirical networks under study. In Sec.~\ref{sec:Method}, we provide the mathematical properties of the model and  the  numerical continuation methods to identify potential bifurcations. In Sec.~\ref{sec:Transcritical}, we explore the transcritical bifurcations that underlie species extinction. The two geometrical mechanisms for multistability are investigated in detail in Sec.~\ref{sec:Hopf} and ~\ref{sec:degTrans}, respectively. In Sec.~\ref{sec:BifurcationDiagram}, we present a phase diagram illustrating  how the bifurcations organize themselves as the strengths of competition and mutualism are varied. Finally, we conclude in Sec.~\ref{sec:Conclusion} with a summary of our findings and an outlook on open questions and future research.

\section{Model and overview}
\label{set:Model}

We consider the abundances $x_1,x_2, ... , x_s$ of $s$ species, including $n_p$ plant-group species and $n_a$ animal(pollinator)-group species.  For the remainder of the paper, we use indices $1$ to $n_p$ for plant species and $n_p+1$ to $s=n_p+n_a$ for animals. Species interact via  all-to-all intragroup competition, and selective intergroup mutualism represented by the mutualism adjacency matrix $A$ of dimensions $n_p \times n_a$ with $A_{ij}=A_{ji}=1$ if plant species $i$ and animal species $n_p+j$ are in mutualistic relation, or zero otherwise. These interspecific interactions affect the species abundance evolving with time as 
\begin{equation} 
\label{eq:LVNet}
\dfrac{ d\mathbf{x}}{dt} = f(\mathbf{x})=\mathbb{X} \lp( \boldsymbol{\alpha} + \mathbb{B}\mathbf{x} \rp),
\end{equation}
where 
$\mathbf{x}:=(x_1,x_2, ... , x_s)$ is the vector of species abundance, $\boldsymbol{\alpha}:=\alpha(1,1, ... , 1)$ is a vector of self-growth rates with $\alpha$ a positive constant, and $\mathbb{X}:= {\rm diag}(x_1,x_2, ... , x_s)$ is a diagonal matrix with the species abundance.  The interaction matrix $\mathbb{B}$ represents the self-regulation, the intragroup competition, and the intergroup mutualism by 
\begin{align}
\mathbb{B} &:=-(1-c)\mathbb{I} - c\mathbb{J}+m \mathbb{A}\nonumber\\
&=
\begin{pmatrix}
    -1 & -c& \cdots& -c&  &  & &   \\
    -c & -1 & \cdots& -c&  & &  & \\
    \vdots &  & & \vdots& &  m\,A&  & \\
    -c & \cdots & -1 & -c & & & & \\
    -c & \cdots & -c & -1 & & & & \\
    & & & & -1 & -c& \cdots& -c\\
    & & & & -c & -1& \cdots& -c\\
    & m\,A^T& & & \vdots &  & & \vdots\\
    & & & & -c & \cdots & -1 & -c \\
    & & & & -c & \cdots & -c & -1
\end{pmatrix},
\label{eq:Bmatrix}
\end{align}
where $\mathbb{I}$ is the identity matrix of dimension $s\times s$, and  
$$\mathbb{J} := \begin{pmatrix}
\mathbf{J}_{n_p} & 0 \\
0 & \mathbf{J}_{n_a} 
\end{pmatrix}
\quad {\rm and} \quad \mathbb{A} := \begin{pmatrix}
0 & A \\
A^T & 0 
\end{pmatrix}$$
with $\mathbf{J}_{n_p}$ and $\mathbf{J}_{n_a}$ the all-ones matrices of dimensions $n_p\times n_p$ and $n_a\times n_s$, respectively.
Notice that parameters $c$ and $m$ control the strength of competition and mutualism, respectively. We set $\alpha=1$ as $\alpha$ only rescales time and abundance; that is, if $\mathbf{x}(t)$ is a solution for a particular $\alpha$ then $\alpha\mathbf{x}(\alpha t)$ is a solution to Eq.~\eref{eq:LVNet} with $\alpha=1$. Without interspecific interactions, i.e., $c=m=0$, each species would be independent. 

\begin{figure*}
\centering
\includegraphics{./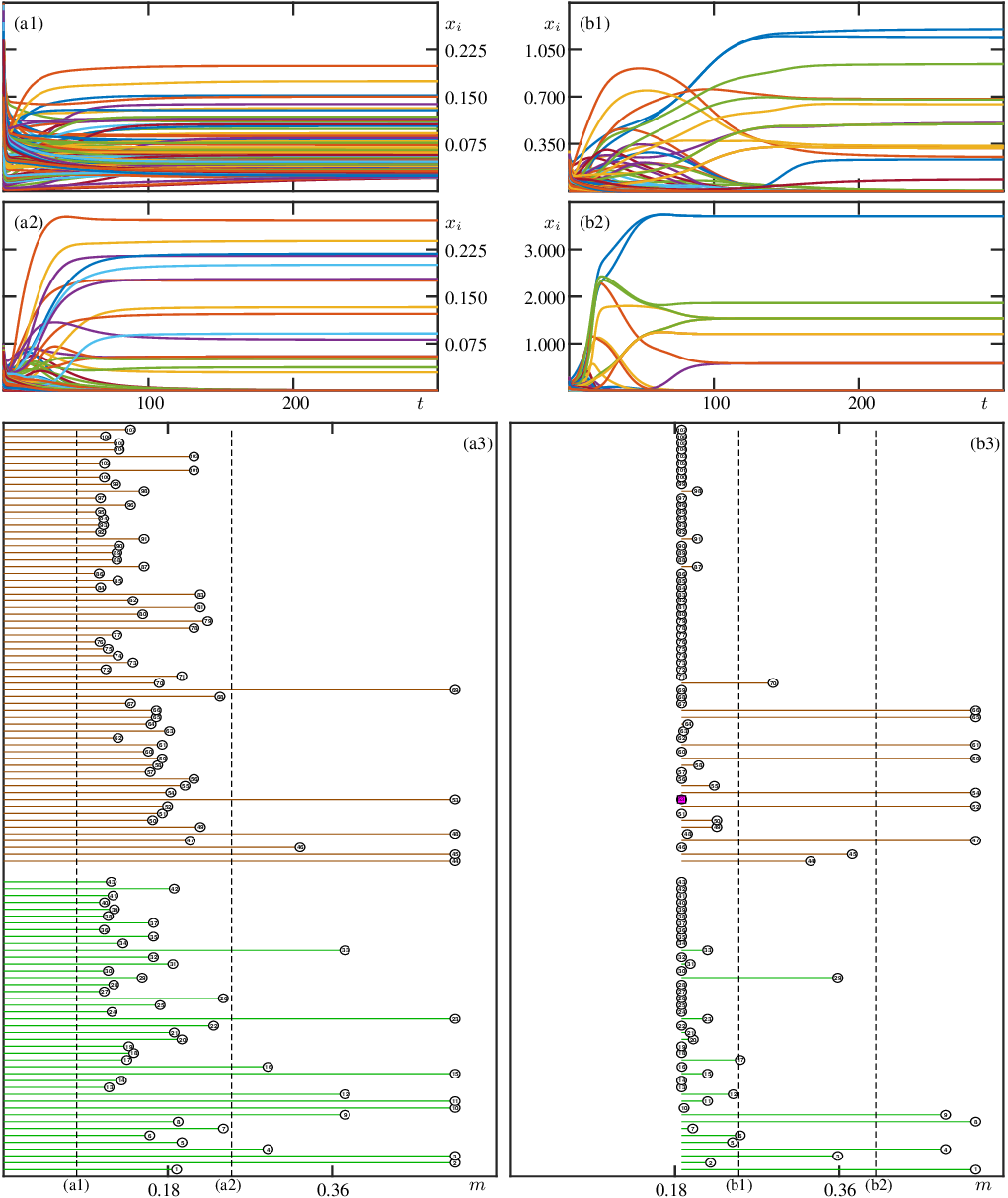}
\caption{Abundance and persistence of individual species.
(a1) Time evolution of the abundances $x_i$'s of 107 species in network $A_1$  with $c=0.3$ and $m=0.08$ for a randomly-selected initial condition. 
(a2) The same as (a1), but with $m=0.25$.
(a3) Persistence diagram, 
representing by horizontal lines, green for plants and brown for animals, the ranges of $m$ for which species survive when $c$ is fixed at $0.3$. This is obtained by numerical continuation, starting from the full-coexistence equilibrium at $m=0$.   
(b1) The same as (a2), but for a different initial condition.
(b2) The same as (b1), but with $m=0.485$. 
(b3) Persistence diagram starting from another equilibrium at $m=m^{(2)}\approx 0.187267$. 
The species 53 represented by a purple square has a negative abundance before a transcritical bifurcation makes it zero at $m^{(2)}$. See Sec.~\ref{sec:Hopf}. Vertical dashed lines in (a3) and (b3) indicate the values of $m$ used in (a1),(a2), (b1), and (b2). } 
\label{fig:BifNetwork1} 
\end{figure*}

For the mutualism matrix $A$, we use two datasets  from the Web of Life database~\cite{webOfLife}, one consisting of $43$ plants and $64$ pollinators, and the other $29$ plants and $81$ pollinators, which we refer to as networks $A_1$ and $A_2$, respectively. The structure of these real-world mutualistic communities is not uniform or random but structured such that they are often nested~\cite{bascompte2003nested,Lee:2012aa,PhysRevX.9.031024} and individual species have widely different numbers of mutualistic partners, resulting in different numbers of non-zero elements across rows and columns in the matrix $A$ in Eq.~\eqref{eq:Bmatrix}~\cite{suweis2013emergence,PhysRevLett.108.108701}. The aim of the present study is to investigate the equilibrium states of such structured competitive-mutualistic networks.

Setting the initial abundance $x_i(0)$ of each species $i$ to be a random number between $0$ and $1$, we numerically integrate system~\eqref{eq:LVNet} with a fixed competition strength $c=0.3$ and different values of the mutualism strength $m$ to obtain $x_i(t)$'s for all $i$ and time $t$ as shown in \fref{fig:BifNetwork1}(a1),(a2), (b1), and (b2). When the mutualistic interaction is sufficiently weak, e.g., $m=0.08$ as in \fref{fig:BifNetwork1}(a1),  all the species coexist, which we call the full-coexistence state. On the other hand, for $m=0.25$, only $17$ species survive as shown in \fref{fig:BifNetwork1}(a2); hence, strong mutualistic interactions may break the full coexistence, resulting in some species becoming extinct while others flourish. 

The persistence of species under a gradual increase of $m$, modeling the environmental or internal perturbations, exhibits a strong inequality; the persistence line drawn in the range $[0, m^*_{F,i})$, for which each species $i$ survives, is of quite different length. It is also remarkable that no line extends infinitely, as the surviving species' abundance grows without bounds past the end point. These persistence lines are computed by starting a numerical continuation scheme from the attracting equilibrium of system~\eref{eq:LVNet} at $m=0$, and then studying its variation under increments of $m$; more details are in the next section.

At large values of $m$, one can find a totally different equilibrium state,  by using different initial conditions, from that reached from the full-coexistence state considered in Fig.~\ref{fig:BifNetwork1}(a3). This means that there can be multiple attracting equilibria, for the same system parameters, each representing a different scenario of species survival and extinction. For instance, at $c=0.3$ and $m=0.3$, the time-evolution of individual species abundance is different between \fref{fig:BifNetwork1}(a2) and \fref{fig:BifNetwork1}(b1). The final state in \fref{fig:BifNetwork1}(b2) at $m=0.485$ showcases how the new attracting equilibrium in \fref{fig:BifNetwork1}(b1) exhibits different extinctions as $m$ increases compared to \fref{fig:BifNetwork1}(a3). In \fref{fig:BifNetwork1}(b3), we show the persistence diagram of the new equilibrium, which first  appears at $m^{(2)} \approx 0.187267$ with some species already extinct at that point.

The exact persistence diagrams in \fref{fig:BifNetwork1}(a3) and (b3) extend the approximate results in \cite{Lee2022} for small values of $m$. In the subsequent sections, we describe the numerical method used to generate these diagrams and investigate the variation of the equilibrium states with $c$ and $m$ from the perspective of bifurcation theory. The identified nature of the bifurcations underlying species extinction and the appearance of multiple attracting equilibria provides a geometrical picture of the equilibria and their stability.

\section{Numerical continuation method to trace the attracting and feasible equilibria}
\label{sec:Method}

The computational cost of integrating numerically system~\eref{eq:LVNet} with continuously varying the system parameters, to obtain the persistence diagrams shown in Fig.~\ref{fig:BifNetwork1}(a3) and (b3), is quite high; especially, if one would want a good resolution in both $m$ and $c$. Instead, one can consider the equilibria  ${\bf x}^*$ of system~\eqref{eq:LVNet} satisfying 
\begin{equation} 
\label{eq:LVNetZeros}
\mathbb{X}^* \lp( \mathbf{1}_s + \mathbb{B}\mathbf{x}^* \rp) =0
\end{equation}
and their stability. Once identifying an attracting equilibrium for particular values of $c$ and $m$, we employ pseudo-arch length continuation techniques \cite{Doedel2007} to study the variation in their components(abundances) and stability with $c$ and $m$. In this way, we can pinpoint the bifurcations that the equilibria might exhibit, allowing us to identify attracting equilibria for general values of the system parameters without a lot of additional cost.

System~\eref{eq:LVNet} can generically have exponentially many  equilibria given $2^s$ different scenarios of extinct species. With $\mathcal{J} \subseteq \{1, 2, ... , s\}$ a subset of indices, the equilibrium solution $\mathbf{x_{\mathcal{J}}^*} \in \R^s$ with the species in $\mathcal{J}$ extinct is given by
\begin{equation}
x_{\mathcal{J},j}^* = \left\{
\begin{array}{ll}
0 & \quad {\rm for} \quad j\in \mathcal{J},\\
-[\mathbb{B^+}^{-1}\mathbf{1}_{s-|\mathcal{J}|}]_j & \quad {\rm for} \quad j \notin \mathcal{J},
\end{array}
\right.
\label{eq:xequil}
\end{equation}
where $\mathbb{B^+}$ is the surviving species' interaction matrix obtained by eliminating in $\mathbb{B}$ the rows and columns that are in $\mathcal{J}$.  This equilibrium exists if and only if $\mathbb{B^+}$ is invertible; otherwise, an equilibrium 
with only $\mathcal{J}$ species extinct does not exist or there are infinitely many such equilibria. In the present study, it should be noted that not all but only feasible and attracting equilibria are physically meaningful. An equilibrium  with $x_i\geq 0$ for all $i$ is considered as feasible since species cannot have a negative abundance~\cite{ROBERTS:1974aa}, and as attracting, locally or globally, if certain or all initial conditions converge to it. A major challenge is thus to identify feasible and attracting equilibria and understand how they vary with the system parameters. 

At $m=0$ for $0 \leq c <1$,  the full-coexistence equilibrium $\mathbf{x_\emptyset^*}$ with no extinct species is globally attracting, from Goh's Theorem \cite{Goh1979,Baigent2017}, given that $\mathbb{B}$ is negative definite [Appendix~\ref{app:Goh}]. As $\mathbb{B}$ is symmetric  and $m\mathbb{A}$ can be seen as a continuous symmetric perturbation made to $\mathbb{B}$ from $m=0$, the eigenvalues of $\mathbb{B}$ vary continuously with $m$ and thus $\mathbb{B}$ remains negative definitive and $\mathbf{x_\emptyset^*}$ a global attractor for sufficiently small $m$. 

As $m$ becomes larger, two cases might arise: $\mathbb{B}$ stops being negative definitive, or $\mathbf{x_\emptyset^*}$ finds one or more components zero. In either case, Goh's theorem cannot be applied anymore, leading us to investigate the local attractivity of the full-coexistence or other equilibria. If all the eigenvalues of the Jacobian $Df|_{\mathbf{x^*_{\mathcal{J}}}}$ of system~\eref{eq:LVNet} at an equilibrium $\mathbf{x^*_{\mathcal{J}}}$ have negative real parts, then $\mathbf{x_{\mathcal{J}}}^*$ is stable (locally attracting). By studying the variation of these eigenvalues as $m$ increases, one can find when the full-coexistence equilibrium may become unstable and another equilibrium becomes stable. 

We apply numerical continuation techniques~\cite{Doedel1981, Doedel2010} to trace changes in the local stability of the equilibrium solutions satisfying Eq.~\eqref{eq:LVNetZeros} while parameters are varied. As $m$ increases for fixed $c$, an attracting equilibrium $\mathbf{x^*_{\mathcal{J}}}$ may become unstable at a certain value $m_*$, which can be detected by monitoring if the determinant of the Jacobian $Df |_{\mathbf{x^*_{\mathcal{J}}}}$ represented by~\cite{Lee2022} 
\begin{equation} 
\label{eq:stabFullDet}
\det \lp( Df |_{\mathbf{x^*_{\mathcal{J}}}} \rp) =   
\prod_{i\in \mathcal{J}} \lp( 1 + \sum_{\ell=1}^s B_{i\ell} x_{\mathcal{J},\ell}^*\rp)
\prod_{i \notin \mathcal{J}} \lp( x^*_{\mathcal{J},i} \rp) 
\det \lp(\mathbb{B}^+ \rp)
\end{equation}
becomes zero. For system~\eref{eq:LVNet}, we identify the following cases:
\begin{enumerate}
 \item The factor $\lp( \prod_{i\notin \mathcal{J}} x^*_{\mathcal{J},i} \rp)$ becomes zero, as one or more components not belonging to  $\mathcal{J}$ become zero, implying the extinction of the corresponding species, at $m_*$. If a surviving species, say $j$, has zero abundance at $m_*$,  this indicates a generic {\bf transcritical bifurcation} \cite{KuznetsovBook2004} as the equilibrium $\mathbf{x^*_{\mathcal{J}}}$, which is feasible and attracting when $m$ is close to but smaller than $m_*$, coincides with another equilibrium $\mathbf{x}^*_{\mathcal{J}^\prime}$ with $\mathcal{J}^\prime = \mathcal{J}\cup\{j\}$ at $m_*$. For $m > m_*$, $\mathbf{x^*_{\mathcal{J}}}$ is unstable (also unfeasible since $x^*_{\mathcal{J},j}<0$) and the new one $\mathbf{x}^*_{\mathcal{J}^\prime}$ is attracting and feasible; the species in $\mathcal{J}^\prime - \mathcal{J}$ are newly extinct 
 \footnote{If more than one components, say the components in a set  $\mathcal{K}$, become zero at $m_*$, then the equilibrium $\mathbf{x^*_{\mathcal{J}}}$  coincides with all the equilibria $\mathbf{x_{\mathcal{J}^\prime}^*}$ with all $\mathcal{J}^\prime$ in $\mathcal{P}(\mathcal{K})$, where $\mathcal{P}(\mathcal{K})$ is the power set of $\mathcal{K}$. This occurs if Eq.~\eref{eq:LVNet} is equivariant \cite{Golubitsky1985} under a permutation of species, i.e., when a collection of plants or animals have the same set of mutualistic partners.}.
  
 \item The factor $\det \lp(\mathbb{B}^+ \rp)$ is zero. Then $\mathbb{B}^+$ is not invertible, and the solution $\mathbf{x^*_{\mathcal{J}}}$ cannot be given by Eq.~\eqref{eq:xequil}; there are either infinitely many or no solutions at all. The former corresponds to  \textbf{degenerate transcritical bifurcations}, a mechanism generating multistability detailed in \Sref{sec:degTrans}.  The latter case implies that $\mathbf{x_{\mathcal{J}}^*}$ exhibits a \textbf{transcritical bifurcation at infinity} \cite{Giraldo2017, Giraldo2020I, Matsue2017}; Past this transition at $m_*$, one finds the new equilibrium with abundances growing unbounded.
\end{enumerate}
Increasing $m$ at fixed $c$, we compute which of the two cases occurs first. If the first one occurs, then we consider the new attracting equilibrium $\mathbf{x_{\mathcal{J}^\prime}^*}$ and check with it which of the two cases occurs while increasing $m$ further. These procedures are repeated until the second case occurs. Suppose that the second case occurs, and it is a degenerate transcritical bifurcation. Then, we consider at least two different equilibria, indicating multistability, and recursively study all the attracting equilibria while increasing $m$ further.  If the second case occurs and there is a transcritical bifurcation at infinity, then we stop. 

Since the considered networks are large, keeping track of all the equilibria in system~\eref{eq:LVNet} and finding which one can create multistatibility is challenging. To circumvent this problem, we numerically integrate system~\eref{eq:LVNet} for particular values of $m$ and $c$ under different initial conditions and investigate whether the solutions converge to more than one equilibria. If it happens, then we apply our continuation routine to each of those distinct equilibria in reverse, e.g., by decreasing $m$ at fixed $c$, to identify a series of transcritical bifurcations leading to the attracting and feasible equilibria with smaller sets of extinct species. Eventually 
an attracting but unfeasible equilibrium is identified at $m^\prime>0$, which is found to have been stabilized by \textbf{Hopf bifurcation} at $m$ smaller than $m^\prime$ as will be detailed in \Sref{sec:Hopf}. For each of these original equilibria, we produce persistence diagram as shown in \fref{fig:BifNetwork1}(a3) and (b3) to demonstrate multistability. 

These bifurcations disclose the geometrical mechanisms underlying species extinction and multistability; Hopf bifurcations and degenerate transcritical bifurcations lead to new attracting equilibria while transcritical bifurcations transfer stability from one equilirium to another, maintaining the number of attracting equilibria. The involved computations are performed using the widely-used software package \textsc{Auto07p} \cite{Doedel1981, Doedel2010} that finds and traces equilibria, periodic orbits, and suitable two-point boundary values problems as functions of system parameters. Moreover, it allows for the identification of different types of bifurcations, the monitoring of the stability of equilibria, and the determination of when two or more solution branches meet at a particular value and switch between them during continuation. The latter is relevant when different equilibria coincide at the moment of a transcritical bifurcation. A good introduction to such techniques and the use of \textsc{Auto07p} can be found in \cite{Doedel2007}.

\section{Results}
\label{sec:Results}

In this section,  we present and analyze the results obtained by applying the  method described in Sec.~\ref{sec:Method}. Numerical continuation yields the variation of an equilibrium, and the possible change of its stability as a parameter varies, which informs us of the nature of bifurcations and allows us to discover the geometrical mechanisms of species extinction and multistability. The results of this section lead to the phase diagram in Sec.~\ref{sec:BifurcationDiagram}.

\subsection{Transcritical bifurcations underlying species extinction}
\label{sec:Transcritical}

\begin{figure*}
\centering
\includegraphics{./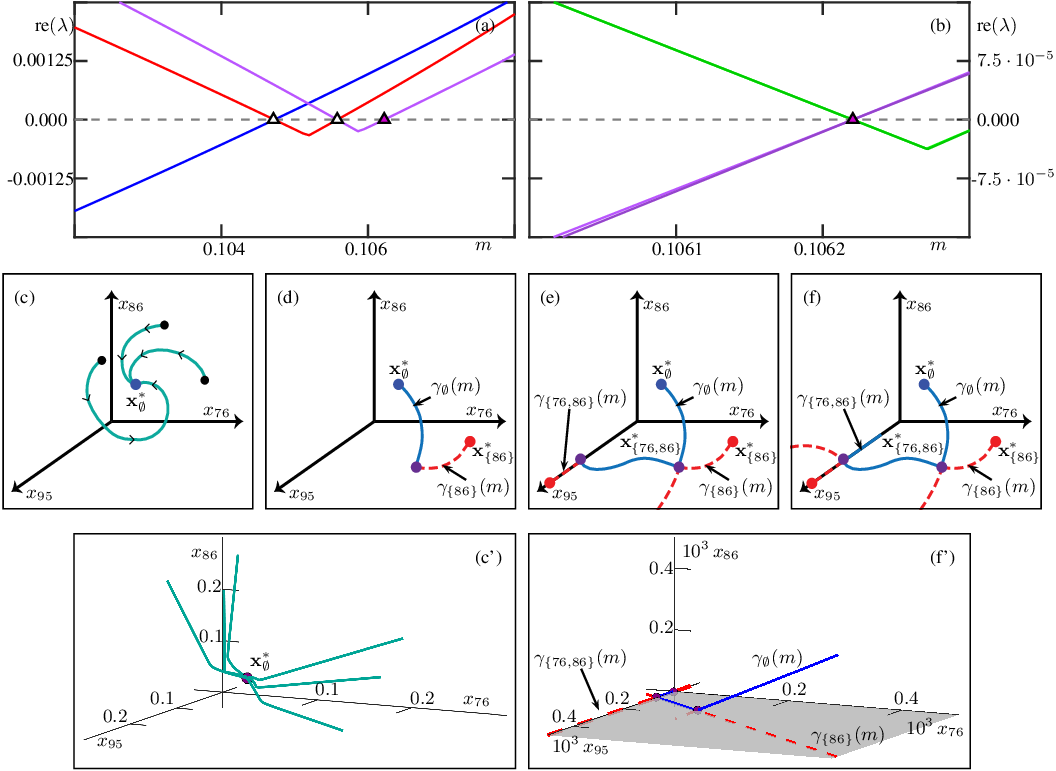}
\caption{Transcritical bifurcations and species extinction. 
(a) The largest eigenvalues of the equilibria $\mathbf{x^*_{\emptyset}}$ (blue line), $\mathbf{x}^*_{\{86\}}$ (red) and $\mathbf{x}^*_{\{86,76\}}$ (purple) as functions of $m$ for $c=0.3$ in network $A_1$.  Their signs change at $m_{\{86\}}, m_{\{86,76\}}$ and $m_{\{86,76,95,97\}}$ represented by triangles. 
(b) A magnification of (a), where shown are the two largest eigenvalues of the equilibria $\mathbf{x}^*_{\{86,76\}}$ (purple lines) and $\mathbf{x}^*_{\{86,76,95,97\}}$ (green lines). 
(c) Phase portrait in the $(x_{95},x_{76},x_{86})$-space for $m<m_{\{86\}}$, where the full coexistence equilibrium $\mathbf{x^*_\emptyset}$ (blue dot) is globally attracting. Schematic trajectories (green lines) starting from different initial conditions (black dots) are shown to converge to it; the arrows represent the direction of time. 
(d) Two equilibria $\mathbf{x^*_\emptyset}$ (blue dot) and $\mathbf{x}^*_{\{86\}}$ (red dot) move along paths $\gamma_{\emptyset}(m)$ (blue curve) and $\gamma_{\{86\}}(m)$ (red curve) with increasing $m$ until they meet in a transcritical bifurcation (purple dot) at $m=m_{\{86\}}$. A solid(dashed) line represents that the corresponding equilibrium is stable (unstable).
(e) As $m$ is further increased, $\mathbf{x}^*_{\{86\}}$ becomes stable and $\mathbf{x^*_\emptyset}$ becomes unstable and unfeasible. At the same time, another equilibrium $\mathbf{x}^*_{\{86,76\}}$ moves along a path $\gamma_{\{86,76\}}(m)$ on the $x_{95}$ axis. 
(f) Another transcritical bifurcation occurs at $m=m_{\{86,76\}}$ between  $\mathbf{x}^*_{\{86\}}$ and $\mathbf{x}^*_{\{86,76\}}$. For $m>m_{\{86,76\}}$, the equilibrium $\mathbf{x}^*_{\{86,76\}}$ is stable.
(c') The numerical counterpart of (c) showing the solution trajectories from different initial conditions converging to $\mathbf{x^*_\emptyset}$, which are obtained by numerically integrating system~\eqref{eq:LVNet} at $m=0$ and $c=0.3$  
(f') The numerical counterpart of (f) showing the paths of the three equilibria undergoing two transcritical bifurcations, obtained from the numerical continuation data at $c=0.3$.
} 
\label{fig:SketchTrans} 
\end{figure*}

Here we focus on the bifurcations generating the persistence diagram in Fig.~\ref{fig:BifNetwork1}(a3), which represents the survival and extinction of individual species when the mutualism strength $m$ is increased slowly enough for the system to reach equilibrium at each given $m$. We obtain the corresponding equilibria by tracing the attracting feasible equilibria reached from the full-coexistence equilibrium while increasing $m$ from zero within the numerical continuation framework. 

In \fref{fig:BifNetwork1}(a3),  at fixed $c=0.3$, the species labeled $86$ goes first extinct at $m = m_{\{86\}}\approx 0.10471$ while all the $107$ species coexist for $m<m_{\{86\}}$. As $m$ is increased further, the species $76$ next goes extinct at $m_{\{86,76\}}\approx  0.10558$, and then two species $95$ and $97$ go simultaneously extinct at $m_{\{86,76,95,97\}} \approx 0.10622$. 

These extinction events are characterized by the exchanges of stability among four equilibrium points, $\mathbf{x_\emptyset}^*, \mathbf{x}^*_{\{86\}},\mathbf{x}^*_{\{86,76\}}$, and $\mathbf{x}^*_{\{86,76,95,97\}}$. \Fref{fig:SketchTrans}(a) shows the largest eigenvalues, having the largest real parts, of the Jacobian at the equilibria $\mathbf{x_{\emptyset}^*}, \mathbf{x}_{\{86\}}$, and $\mathbf{x}_{\{86,76\}}$. As $m$ increases passing $m_{\{86\}}$, the largest eigenvalue of $\mathbf{x_{\emptyset}^*}$ becomes positive, while the largest eigenvalues of  $\mathbf{x}^*_{\{86\}}$ becomes negative, implying that the former becomes unstable and the latter stable. Such exchange of stability occurs also between $\mathbf{x}_{\{86\}}$ and $\mathbf{x}_{\{86,76\}}^*$ at $m_{\{86,76\}}$, and between $\mathbf{x}_{\{86,76\}}^*$ and $\mathbf{x}_{\{86,76,95,97\}}^*$ at $m_{\{86,76,95,97\}}$.  It should be noted that the two largest eigenvalues of the Jacobian at $\mathbf{x}_{\{86,76\}}^*$ become positive as $m$ increases beyond $m_{\{86,76,95,97\}}$ [Fig.~\ref{fig:SketchTrans}(b)], since they are equal due to the symmetry of the interaction matrix $\mathbb{B}^+$ regarding two species $95$ and $97$ that are connected to the same partner (plant) species given the extinction of species $86$ and $76$. 

All these transcritical bifurcations occur when the involved equilibria collide in the phase space. In the phase space projected onto the abundance of three species $(x_{95},x_{76}, x_{86})$ for ease of representation, the full-coexistence equilibrium $\mathbf{x}_\emptyset^*$, that is globally attracting as represented by the inward flow in \fref{fig:SketchTrans}(c), moves towards the $(x_{95},x_{76})$-plane, at which $x_{86}=0$, as $m$ increases until it collides at $m=m_{\{86\}}$ with another equilibrium $\mathbf{x}_{\{86\}}^*$ which is unstable as long as $m<m_{\{86\}}$ [\fref{fig:SketchTrans}(d)]. Their stability is exchanged at $m_{\{86\}}$ as described above, indicating a transcritical bifurcation. For $m>m_{\{86\}}$, the equilibrium $\mathbf{x}_{\{86\}}^*$ is stable moving in the $(x_{95},x_{76})$-plane while the former equilibrium $\mathbf{x}_{\emptyset}^*$ is unstable and unfeasible, moving in the space of $x_{86}<0$ [Fig.~\ref{fig:SketchTrans}(e)].
As $m$ is increased further,  $\mathbf{x}^*_{\{86\}}$ moves towards the $x_{95}$-axis until it coincides with another equilibrium $\mathbf{x}^*_{\{86,76\}}$ at $m_{\{86,76\}}$, which is another transcritical bifurcation. In Fig.~\ref{fig:SketchTrans}(f),  as $m$ further increases, the equilibrium $\mathbf{x}^*_{\{86,76\}}$ moves towards the equilibrium $\mathbf{x}^*_{\{86,76, 95,97\}}$, located at the origin in this projected phase space, and they collide and exchange stability at $m_{\{86,76,95,97\}}$. 

The described bifurcations are based on our numerical data showcased in \fref{fig:SketchTrans}(c') and (f'). Particularly, \fref{fig:SketchTrans}(c') illustrates the solution trajectories that converge to $\mathbf{x^*_\emptyset}$ from different initial conditions when $m=0$, the numerical counterpart of \fref{fig:SketchTrans}(c).  \Fref{fig:SketchTrans}(f') shows the continuation paths of the three equilibria, the counterpart of the ones schematically sketched in \fref{fig:SketchTrans}(f). Such transcritical bifurcations continue until the attracting equilibrium escapes at infinity, as exemplified by the termination of the persistence lines for the final 11 surviving species at $m \approx m_\infty \approx 0.5109$ at $c=0.3$ in Fig.~\ref{fig:BifNetwork1}(a3).

\subsection{Multistability arising from Hopf bifurcations} 
\label{sec:Hopf}

\begin{figure*}
  \centering
\includegraphics{./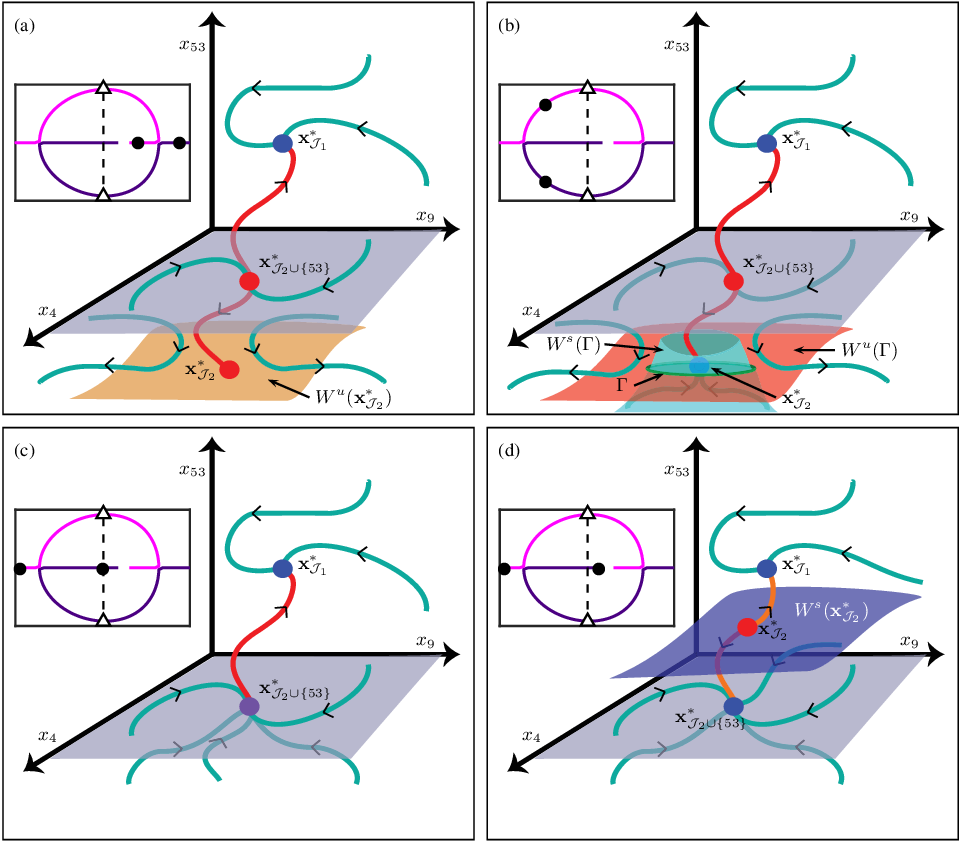}
\caption{Occurrence of Hopf bifurcation and multistability.
(a) Three equilibria in the $(x_4,x_9,x_{53})$-space for $m<m^{\rm (hopf)}$ with $c$ fixed at $0.3$. Schematic trajectories show that $\mathbf{x}^*_{\mathcal{J}_1}$ (blue dot) is an attracting equilibrium while $\mathbf{x}^*_{\mathcal{J}_2\cup\{53\}}$ (red dot) and $\mathbf{x}^*_{\mathcal{J}_2}$ (red dot) are  saddle. $\mathbf{x}^*_{\mathcal{J}_2}$ is unfeasible.  Also shown are the unstable manifold $W^u(\mathbf{x}^*_{\mathcal{J}_2})$ (orange surface) and the hyperplane $x_{53}=0$ (grey surface). Inset: Two largest eigenvalues $\lambda$'s of the Jacobian at  $\mathbf{x}^*_{\mathcal{J}_2}$ (black dots) in the complex plane with the range $|\rm{re}(\lambda)|<0.8\cdot 10^{-4}$ and $|\rm{imag}(\lambda)|<1.2\cdot 10^{-4}$. The purple and magenta curves indicate how these eigenvalues will move as $m$ increases. 
(b) At $m^{\rm (hopf)}$, a subcritical Hopf bifurcation occurs for $\mathbf{x}^*_{\mathcal{J}_2}$. For $m\gtrsim m^{\rm (hopf)}$, an unstable saddle periodic solution $\Gamma$ (green curve) is created, which has a $(s-1)$-dimensional stable manifold $W^s(\Gamma)$ (cyan surface) and two-dimensional unstable manifold $W^u(\Gamma)$ (red surface). Consequently the equilibrium $\mathbf{x}^*_{\mathcal{J}_2}$ becomes stable (blue dot). The equilibrium $\mathbf{x}^*_{\mathcal{J}_2\cup\{53\}}$  remains saddle. Inset: The two largest eigenvalues of $\mathbf{x}^*_{\mathcal{J}_2}$ are now complex conjugates with a negative real part.
(c) Two equilibria $\mathbf{x}^*_{\mathcal{J}_2\cup\{53\}}$ and $\mathbf{x}^*_{\mathcal{J}_2}$ coincide (purple dot) when $m=m^{(2)}$, which is a transcritical bifurcation. Inset: $\mathbf{x}^*_{\mathcal{J}_2}$ has one zero eigenvalue. 
(d) For $m\gtrsim m^{(2)}$, the equilibrium $\mathbf{x}^*_{\mathcal{J}_2}$ becomes a feasible saddle equilibrium with a $(s-1)$-dimensional stable manifold $W^s(\mathbf{x}^*_{\mathcal{J}_2})$ (blue surface) and one-dimensional unstable manifold $W^u(\mathbf{x}^*_{\mathcal{J}_2})$ (orange curve). On the other hand,  $\mathbf{x}^*_{\mathcal{J}_2\cup\{53\}}$ becomes stable (blue dot). Inset: $\mathbf{x}^*_{\mathcal{J}_2}$ has one positive eigenvalue. 
} 
\label{fig:TopSketchHopf} 
\end{figure*}

For large $m$, we find that multiple attracting equilibria may exist. One of these corresponds to a continuation of increasing $m$ from the full-coexistence equilibrium, while the remaining ones do not.  Here we investigate within the numerical continuation framework the underlying geometrical mechanisms for the different equilibria as shown in Fig.~\ref{fig:BifNetwork1}(a3) and (b3).

We find a new feasible attracting equilibrium $\mathbf{x}_{\mathcal{J}_2\cup \{53\}}^*$ shown in \fref{fig:BifNetwork1}(b3) emerges at $m=m^{\rm (2)}\approx 0.187267$, in addition to the one $\mathbf{x}_{\mathcal{J}_1}^*$ of Fig.~\ref{fig:BifNetwork1}(a3), when $c$ is fixed at $0.3$. The indices of extinct species in $\mathcal{J}_1$ and $\mathcal{J}_2$ are quite different [Table~\ref{tab:J1andJ2} in Appendix~\ref{app:tabSets}]. We find by numerical continuation that $\mathbf{x}_{\mathcal{J}_2\cup \{53\}}^*$ acquires stability from an unfeasible attracting equilibrium $\mathbf{x}_{\mathcal{J}_2}^*$, which has a negative abundance for species $53$, through a transcritical bifurcation at $m=m^{\rm (2)}$. The unfeasible equilibrium $\mathbf{x}_{\mathcal{J}_2}^*$ is not stable for $m<m^{\rm (hopf)}\approx 0.187251$ but becomes stable by a Hopf bifurcation, which cannot occur for any feasible equilibrium in system~\eref{eq:LVNet} with a symmetric interaction matrix $\mathbb{B}$\cite{MacArthur1970},  at $m=m^{\rm (hopf)}$ smaller than $m^{(2)}$.

We sketch the phase portrait showing these two bifurcations in the projected space of $x_{4}, x_{9}$ and $x_{53}$ in Fig.~\ref{fig:TopSketchHopf}. 
When $m<m^{\rm (hopf)}$, there exists an attracting feasible equilibrium $\mathbf{x}^*_{\mathcal{J}_1}$. In addition, there are two unstable saddle equilibria $\mathbf{x}^*_{\mathcal{J}_2\cup \{53\}}$ and $\mathbf{x}^*_{\mathcal{J}_2}$ at which the Jacobian has one and two positive eigenvalues, respectively [Fig.~\ref{fig:TopSketchHopf}(a) and the inset therein]. Here, $\mathbf{x}^*_{\mathcal{J}_2\cup \{53\}}$  is a feasible equilibrium, and  $\mathbf{x}^*_{\mathcal{J}_2}$ is an unfeasible one with a negative component for species $53$. The equilibrium $\mathbf{x}^*_{\mathcal{J}_2}$ possesses a two-dimensional unstable manifold $W^u(\mathbf{x}^*_{\mathcal{J}_2})$ (orange surface) and a $(s-2)$-dimensional stable manifold $W^s(\mathbf{x}^*_{\mathcal{J}_2})$ (not shown) which correspond to the set of initial conditions that converge backward and forward in time to $\mathbf{x}^*_{\mathcal{J}_2}$, respectively. 

As $m$ increases past $m^{\rm (hopf)}$, a saddle periodic orbit  $\Gamma$ (green curve) surrounding   $\mathbf{x}^*_{\mathcal{J}_2}$ (blue dot) is created, which possesses a two-dimensional unstable manifold $W^u(\Gamma)$ (red surface) and $(s-1)$-dimensional stable manifold $W^s(\Gamma)$ (cyan surface), indicating a subcritical Hopf bifurcation.  Therefore the unfeasible equilibrium  $\mathbf{x}^*_{\mathcal{J}_2}$ (blue dot) is now attracting [Fig.~\ref{fig:TopSketchHopf}(b)]. 

Then  $\mathbf{x}^*_{\mathcal{J}_2}$ moves towards the $(x_4, x_9)$-plane as $m$ increases further, and collides with the feasible but currently unstable equilibrium $\mathbf{x}^*_{\mathcal{J}_2\cup \{53\}}$ at $m=m^{\rm (2)}$ in a transcritical bifurcation as shown in Fig.~\ref{fig:TopSketchHopf}(c). Through this bifurcation, the equilibrium $\mathbf{x}^*_{\mathcal{J}_2\cup \{53\}}$ becomes attracting, and $\mathbf{x}^*_{\mathcal{J}_2}$ becomes a feasible saddle equilibrium with one positive real eigenvalue, as shown in Fig.~\ref{fig:TopSketchHopf}(d).  Its $(s-1)$-stable manifold $W^s(\mathbf{x}^*_{\mathcal{J}_2})$ becomes the boundary that separates the initial conditions that converge to $\mathbf{x}^*_{\mathcal{J}_1}$ and $\mathbf{x}^*_{\mathcal{J}_2\cup \{53\}}$. 

For the new feasible attracting equilibrium $\mathbf{x}^*_{\mathcal{J}_2\cup \{53\}}$, transcritical bifurcations can occur as $m$ increases representing the successive extinction of species  as described in \sref{sec:Transcritical} and shown in \fref{fig:BifNetwork1}~(b3). It should be also noted that more new equilibria can appear via the Hopf bifurcations of different unfeasible equilibria; indeed, for the network $A_1$ we observe multiple attracting equilibria arising through Hopf bifurcations as shown in \fref{fig:BifNetwork3-4} in  Appendix~\ref{app:ExtraHopf} (again for $c=0.3$). 

\subsection{Multistability arising from Degenerate Transcritical bifurcation}  \label{sec:degTrans}

\begin{figure*}
  \centering
\includegraphics{./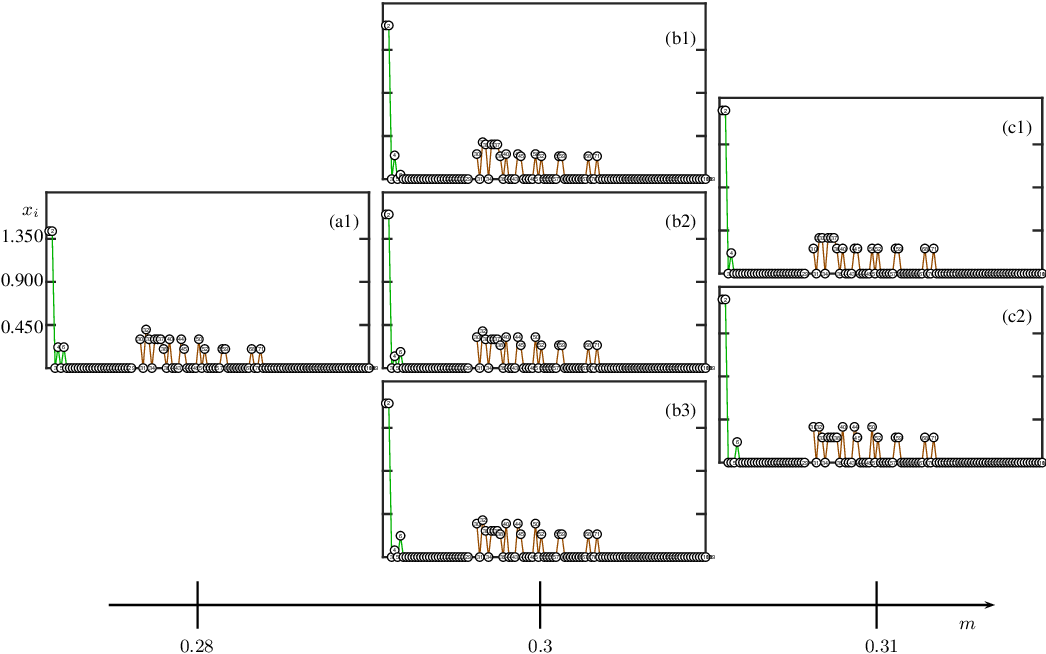}
\caption{Occurrence and consequences of a degenerate transcritical bifurcation.
(a1) Final abundances of individual species 
for any random initial condition, corresponding to the only one attracting equilibrium,  at $m=0.28$ in networks $A_2$.
(b1, b2, b3) Three examples among infinitely many different equilibria observed at $m=0.3$.
(c1, c2) Two attracting equilibria at $m=0.31$. $c$ is fixed at $0.4$ in all panels.} 
\label{fig:timeTrajMA} 
\end{figure*}

\begin{figure*}
\centering
\includegraphics{./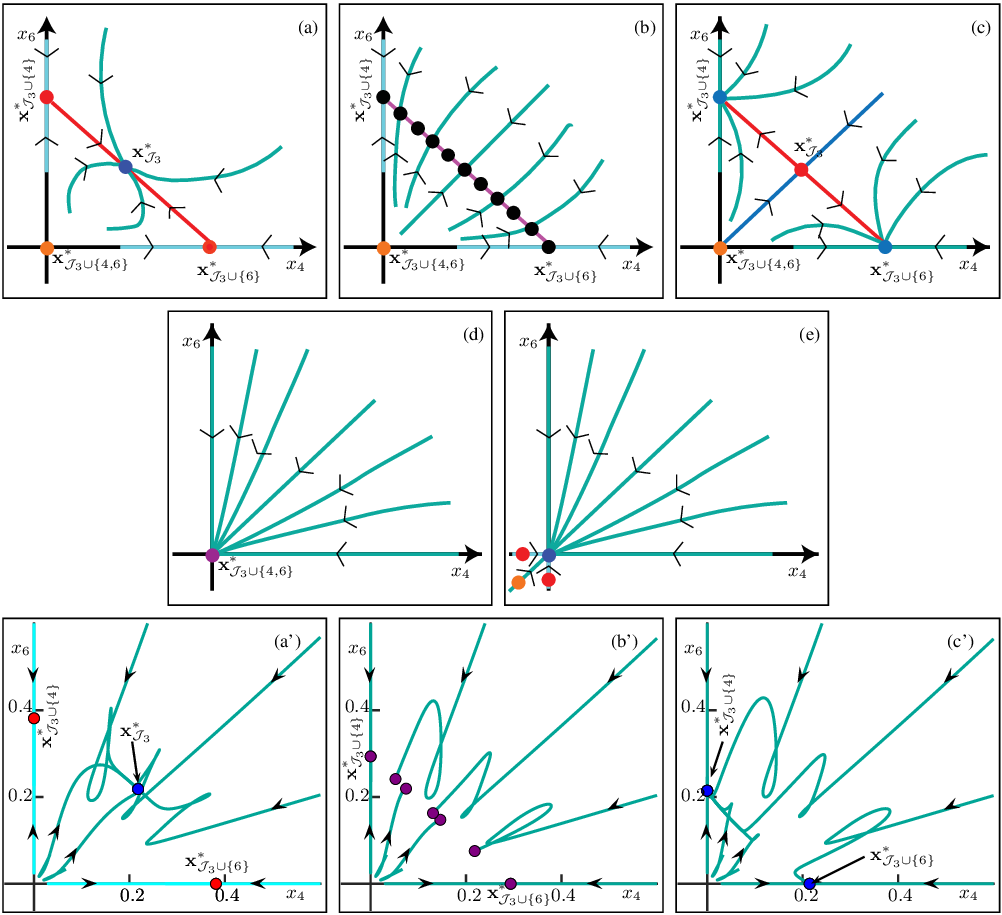}
\caption{Degenerate transcritical bifurcation and multistability. 
(a) Phase portrait in the $(x_{4}, x_{6})$-space for $m<m^{\rm (d)}$, where shown are an attracting equilibrium $\mathbf{x}^*_{\mathcal{J}_3}$ (blue dot), two saddle equilibria $\mathbf{x}^*_{\mathcal{J}_3\cup \{4\}}$ and $\mathbf{x}^*_{\mathcal{J}_3\cup \{6\}}$  (red dots), each one having only one positive eigenvalue, and $\mathbf{x}^*_{\mathcal{J}_3\cup \{4,6\}}$ (black dot) with two positive eigenvalues of the Jacobian. (b) At $m=m^{\rm (d)}$,  a degenerate transcritical bifurcation occurs to form a one-dimensional line of nonhyperbolic equilibria (line with black dots).
(c) For $m\gtrsim m^{\rm (d)}$, $\mathbf{x}^*_{\mathcal{J}\cup \{4\}}$ and $\mathbf{x}^*_{\mathcal{J}\cup \{6\}}$ are  stable and $\mathbf{x}^*_{\mathcal{J}}$ becomes an unstable saddle equilibrium. 
(d) At $m=m_{\mathcal{J}_3\cup\{4,6\}}$, four equilibria coincide.
(e) For $m\gtrsim m_{\mathcal{J}_3\cup\{4,6\}}$, the equilibrium $\mathbf{x}^*_{\mathcal{J}\cup \{4, 6\}}$ becomes stable and the other three equilibria become unfeasible. 
(a', b', c') The numerical counterparts of (a), (b) and (c), respectively, showing the solution trajectories of system ~\eref{eq:LVNet} with different initial conditions at $c=0.4$ and $m=0.28$ for (a'), $m=0.3$ for (b'), and $m=0.31$ for (c'). } 
\label{fig:SketchMA} 
\end{figure*}

Our study demonstrates that multistability can be generated also by degenerate transcritical bifurcations for the LV dynamics in system~\eqref{eq:LVNet}. Degenerate transcritical transitions occur when the reduced interaction matrix $\mathbb{B^+}$ for surviving species has zero determinant, as described in \sref{sec:Method}. By considering the network $A_2$ as a representative example,  we showcase how this mechanism arises in a real network. 

As $m$ increases from zero at fixed $c=0.4$ in network $A_2$, the full-coexistence equilibrium $\mathbf{x_\emptyset^*}$ exhibits a  series of transcritical bifurcations 
until $m$ reaches $m^{\rm (d)}=0.3$. At $m=0.29$,  the final abundances of species are obtained as given in  \fref{fig:timeTrajMA}~(a1) 
without regard to the initial conditions, implying the existence of only one attracting equilibrium. At $m=m^{\rm (d)}=0.3$, in contrast, one finds the final abundance of species $4$ and $6$ taking infinitely many different values depending on the initial conditions, 
three of which are shown in  \fref{fig:timeTrajMA}(b1) to (b3) as an example.
When $m$ is slightly larger than $m^{\rm (d)}$, which we will denote in this paper as $m\gtrsim m^{\rm (d)}$, we find only two final abundances shown in Fig.~\ref{fig:timeTrajMA}(c), one with species $4$ surviving and the other with species $6$ surviving; they cannot coexist simultaneously. The survival and extinction of other species are not different between the two solutions. We remark that during this transition, species $4$ and $6$ 
do not have the same surviving mutualistic partners;
such equivariance under the permutation of their labels~\cite{Golubitsky1985} is not necessary for this transition to occur.

The geometrical mechanism for this type of multistability, emerging through a degenerate transcritical bifurcation, is illustrated in \fref{fig:SketchMA}. At $m<m^{\rm (d)}$, there exists an attracting feasible equilibrium $\mathbf{x}^*_{\mathcal{J}_3}$ (blue dot), with $\mathcal{J}_3$ given in Table~\ref{tab:J3} in Appendix~\ref{app:tabSets}, and two saddle feasible equilibria $\mathbf{x}^*_{\mathcal{J}_3\cup \{4\}}$ and $\mathbf{x}^*_{\mathcal{J}_3\cup \{6\}}$ (red dots) [Fig.~\ref{fig:SketchMA}(a)]. At $m=m^{\rm (d)}$ [Fig.~\ref{fig:SketchMA}(b)] a degenerate transcritical bifurcation occurs  such that
$\mathbf{x}^*_{\mathcal{J}_3}$, $\mathbf{x}^*_{\mathcal{J}_3\cup \{4\}}$ and $\mathbf{x}^*_{\mathcal{J}_3\cup \{6\}}$ all become nonhyperbolic (having an imaginary eigenvalue of the Jacobian). They are connected through a line of nonhyperbolic equilibria (purple line) [Fig.~\ref{fig:SketchMA}(b)] corresponding to the one-dimensional nullspace of the reduced interaction matrix $\mathbb{B^+}$ for $\mathbf{x}^*_{\mathcal{J}}$. Different initial conditions converge to different equilibria on the line. When $m$ is larger than $m^{\rm (d)}$, the line of equilibria disappears, and the stability of equilibria is interchanged; $\mathbf{x}^*_{\mathcal{J}_3}$ becomes saddle, and $\mathbf{x}^*_{\mathcal{J}_3\cup \{4\}}$ and $\mathbf{x}^*_{\mathcal{J}_3\cup \{6\}}$ become attracting [Fig.~\ref{fig:SketchMA}(c)]. Notably, $\mathbf{x}^*_{\mathcal{J}_3}$ has a stable manifold (blue line) that separates the basin of attraction of the two new attracting equilibria; thus, certain initial conditions can converge to $\mathbf{x}^*_{\mathcal{J}_3\cup \{4\}}$ while others to $\mathbf{x}^*_{\mathcal{J}_3\cup \{6\}}$. Therefore, we see  two new attracting equilibria where a species is extinct in one and surviving in the other. \Fref{fig:SketchMA}(a'), (b') and (c') showcase with actual numerical simulations how the solutions converge to different equilibria during this transition as indicated in their schematic sketches counterparts in \fref{fig:SketchMA}(a), (b), and (c).

As $m$ is further increased after the degenerate transcritical bifurcation, the two attracting equilibria can exhibit their respective extinction sequences, keeping multistability or multistability may disappear soon, the latter of  which is the case in our example. At $m=m_{\mathcal{J}_3\cup\{4,6\}}$, as shown in Fig.~\ref{fig:SketchMA}(d), all the equilibria $\mathbf{x}^*_{\mathcal{J}_3\cup \{4\}}$, $\mathbf{x}^*_{\mathcal{J}_3\cup \{6\}}$ and $\mathbf{x}^*_{\mathcal{J}_3}$ coincide with the equilibrium $\mathbf{x}^*_{\mathcal{J}_3\cup \{4,6\}}$. When $m$ is further increased, $\mathbf{x}^*_{\mathcal{J}_3\cup \{4,6\}}$ is the only attracting equilibrium [Fig.~\ref{fig:SketchMA}(e)]. 
Given that the persistence diagrams of these two equilibria are equivalent up to $m^{\rm (d)}$ where this bifurcation occurs (blue line) [\fref{fig:BifMulti} in Appendix~\ref{app:Persistence_degenerate}],  it is only for $m^{\rm  (d)}<m<m_{\mathcal{J}_3\cup \{4,6\}}$ that both diagrams differ, meaning multistability. For $m>m_{\mathcal{J}_3\cup\{4,6\}}$, both species 4 and 6 are extinct and multistability disappears. 

It is worth mentioning that multistability from Hopf bifurcations can still occur in network $A_2$ as $m$ increases. Also, we should remark that the equilibria connected to the full-coexistence equilibrium in network $A_1$ undergo a degenerate transcritical bifurcation though it is not shown in \fref{fig:BifNetwork1}(a3).

\section{Phase diagram}
\label{sec:BifurcationDiagram}

\begin{figure*}
  \centering
\includegraphics{./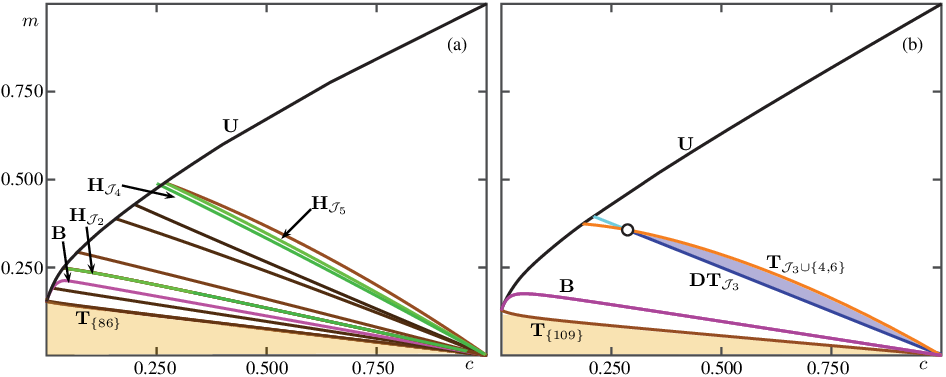}
\caption{Bifurcation diagram
for network  $A_1$ (a), and  $A_2$ (b). 
Shown are representative lines of transcritical bifurcations $\mathbf{T}$ (brown and orange), 
Hopf bifurcation $\mathbf{H}$ (green), degenerate transcritical bifurcation $\mathbf{DT}$ (blue and cyan) and the boundary of bounded asymptotic behavior $\mathbf{U}$ (black). The line $\mathbf{B}$ (purple) represents the smallest values of $m$ and $c$, where the full interaction matrix $\mathbb{B}$ in Eq.~\eqref{eq:Bmatrix} stops being negative definitive. The subindex indicates the extinct species of the equilibrium exhibiting the bifurcation. See \fref{fig:BifNetwork3-4} in Appendix~\ref{app:ExtraHopf}  for persistence diagram of $\mathbf{x}^*_{\mathcal{J}_4}$ and $\mathbf{x}^*_{\mathcal{J}_5}$.} \label{fig:BifPlaneA1} 
\end{figure*}

The numerical continuation approach allows one to obtain the loci of bifurcations in the parameter space and thereby identify the parameter regimes or {\it phases} characterized by distinct sets of extinct species and multistability. Such a phase diagram is presented in \fref{fig:BifPlaneA1}(a) for network $A_1$, in which the first transcritical bifurcation line labeled as $\mathbf{T}_{\{ 86\}}$ (light brown) is drawn, distinguishing the full-coexistence phase  $\mathbf{x^*_\emptyset}$ and the phase with species 86 extinct $\mathbf{x}_{\{86\}}^*$. Above the  line $ \mathbf{T}_{\{ 86\}}$ in the $(c,m)$-plane, a series of transcritical bifurcation lines are located, representing further extinctions, some of which we present by darker brown lines. We also illustrate representative Hopf bifurcations, represented by green lines, which correspond to the mechanism of creating multistability; indeed, really close to these lines we find transcritical bifurcations (not shown) that mark the onset of multistability in system~\eref{eq:LVNet}, above which multiple attracting feasible equilibria coexist. 

As stated earlier, the full coexistence equilibrium $\mathbf{x}^*_\emptyset$ is globally attracting if all of its components are positive and the full interaction matrix $\mathbb{B}$ is negative definite. To understand when this global property is lost, we obtain the boundary where $\mathbb{B}$ stops being negative definitive, which is the line labeled as $\mathbf{B}$ in \fref{fig:BifPlaneA1}(a). Below this line,  $\mathbb{B}$ is negative definite. As it turns out that the line $\mathbf{B}$ is located above the first transcritical bifurcation line $\mathbf{T}_{\{ 86\}}$, the full-coexistence equilibrium $\mathbf{x}_{\emptyset^*}$ is a global attractor and there is no multistability in the orange colored region in  \fref{fig:BifPlaneA1}(a).

We find that all these bifurcation lines (phase boundaries) emanate from the point $(c,m)=(1,0)$. System~\eqref{eq:LVNet} is highly degenerate at $(c,m)=(1,0)$ as it is reduced to ${dx_i \over dt} = x_i( 1 - \sum_{p'=1}^{n_p} x_{p'})$  for all plants $i$, and ${dx_j \over dt} = x_j( 1 - \sum_{a'=1}^{n_a} x_{a'})$ for all animals $j$, leading to a family of nonhyperbolic equilibria in the $(s-2)$-dimensional hyperplane given by the intersection of the hyperplanes $\sum_p x_p=1$ and $\sum_a x_a =1$. 

We also compute the boundary for unbounded abundances $\mathbf{U}$,  where the equilibria disappear at infinity. In \fref{fig:BifPlaneA1}(a), it is shown that the transcritical bifurcation lines,  which correspond to the successive extinctions of the full coexistence equilibrium, terminate at the black line $\mathbf{U}$.  In the persistence diagrams in Figs.~\ref{fig:BifNetwork1}(a3), ~\ref{fig:BifNetwork1}(b3), ~\ref{fig:BifNetwork3-4}(a), and ~\ref{fig:BifNetwork3-4}(c), the last surviving species find their abundances diverging at certain values of $m$. To obtain this boundary numerically, we first need to identify the set of the last surviving species that cannot undergo further transcritical bifurcations. To do so, we fix $m$ at a value before the equilibrium diverges, and then we apply continuation on $c$ until it is equal to one. Then, we fix $c$ at one and then continue on $m$. During this procedure, we jump to new attracting equilibria if further transcritical bifurcation occurs. In this way, we can find the last set of surviving species when $c$ is one. After identifying this set, we can restart the continuation procedure by varying $m$ with $c=0$. Since only mutualism occurs, $\mathbf{x^*_\emptyset}$ cannot exhibit extinction, and it will grow unbounded as $m$ increases. We then choose a value of $m^{\rm (u)}$ for which one of the abundances of the set of the last surviving species becomes large, e.g., $x_{\rm surv}=10^4$. We then obtain the equilibrium by solving the zero-problem in Eq.~\eref{eq:LVNetZeros} under the constraint 
$x_{\rm surv} = 10^4$,
and trace by numerical continuation this equilibrium and its transcritical bifurcations while simultaneously varying $c$ and $m$. This leads us to approximate the terminus of the transcritical bifurcation lines numerically.  We remark that for each attracting equilibrium that emerges through the Hopf bifurcation mechanism, there exists an associated boundary of unbounded abundance (not shown). This demonstrates that compactification techniques can be used to study these bifurcations \cite{Messias2009, Messias2011, Giraldo2017, Giraldo2020I, Matsue2017}.

The phase diagram for network $A_1$ in \fref{fig:BifPlaneA1}~(a) implies much. As the transcritical bifurcation lines, emanating from the degenerate point $(1,0)$,  have negative slopes in the $(c,m)$-plane, one can expect that an increase in mutualism or competition can lead to species extinction; If the competition (mutualism) strength is increased, then the mutualism (competition) strength should be decreased sufficiently to avoid species extinction. It also suggests that as competition in the network intensifies, multistability can arise even with weak mutualism. This indicates that biodiversity scenarios in highly competitive uniform networks are quite fragile, even when weak nonuniform mutualistic dynamics are introduced.

The phase diagram for network $A_2$ is also shown in Fig.~\ref{fig:BifPlaneA1}(b), in which we present i) the first transcritical bifurcation line (brown) for the extinction of species 109, ii) the degenerate transcritical bifurcation line (blue), and iii) the transcritical bifurcation (orange) that distinguish the regime of either species 4 or 6 surviving and that of both extinct. Like for network $A_1$, all these lines emanate from the degenerate point $(c,m)=(1,0)$, implying its role as the organizing center for the transcritical, Hopf, and degenerate transcritical bifurcation lines that create extinction and multistability. 

Interestingly, multistability associated with the survival and extinction of species $4$ and $6$ occurs only when the competition strength is large enough for the degenerate transcritical bifurcation to occur at a smaller value of $m$ than that for the transcritical bifurcation. For instance, at a small value of competition strength, $c=0.250$,  no multistability arises; the phase portrait transits from Fig.~\ref{fig:SketchMA}(a) to  (e) via (d), without (b) or (c), at $m\approx 0.3645$.   That is, species $4$ and $6$ become extinct simultaneously, and there is no range of $m$ where one of them exclusively exists. We also present in \fref{fig:BifPlaneA1}(b) the locus where the equilibria disappear at infinity (black line). 

\section{Summary and Discussion}\label{sec:Conclusion}

In this work, we considered a model community of two species groups— plants and pollinators—under uniform intragroup competition and empirical heterogeneous intergroup mutualism. Through a geometrical and numerical continuation approach, we provided an exact numerical method to systematically pinpoint the loci of the extinction of individual species in the parameter space for two empirical networks. Our approach has pushed us to understand emergent phenomena in structured communities, whereas most theoretical and numerical approaches have so far studied the random, unstructured case. 
 
Most importantly, through our approach, we were able to identify the specific bifurcations responsible for the creation of multistability in the empirical networks we studied.  These include subcritical Hopf bifurcations and degenerate transcritical bifurcations, which serve as mechanisms for creating multiple attracting equilibria. The multistability that they generate exhibit different characteristics.  Specifically, the Hopf bifurcations lead to the formation of new attractive equilibria where the composition of surviving species may not overlap with that of pre-existing equilibria. Conversely, degenerate transcritical bifurcations induce multistability by generating new attracting equilibria that differ only in the finite set of species that survive.

Given that the present study is limited to selected large empirical networks and simple model dynamics, it is desirable to extend our findings to a broader range of interacting communities. This extension should include a thorough investigation of the structural and dynamical conditions for the emergence of multistability, specifically through mechanisms such as Hopf and degenerate transcritical bifurcations. A deeper theoretical exploration of how empirical structural characteristics of ecological communities, such as degree heterogeneity, nestedness, non-uniform interaction strengths, and higher-order interactions, affect bifurcations and multistability is important. Furthermore, a comprehensive analysis of the stability of all equilibrium points in smaller systems, utilizing numerical continuation methods developed in this study, will enhance our understanding of these complex dynamics.

\begin{acknowledgments} 
This work was supported by  KIAS Individual Grants [Nos CG086101 (A.G.) and CG079902 (D.-S.L.)] at Korea Institute for Advanced Study.
\end{acknowledgments}

\bibliography{GL_LVNetworks}

\clearpage
\appendix
\onecolumngrid

\section{Global attractiveness   of the full coexistence equilibrium in case of no mutualism} \label{app:Goh}
\begin{theorem}[Goh's theorem \cite{Goh1979}]\label{thm:Goh}
 Suppose that the Lotka-Volterra system  $\dot{\mathbf{x}}=\mathbb{X} \lp( \boldsymbol{r} + \mathbb{C}\mathbf{x} \rp)$, for $\boldsymbol{r} \in \R^s$ and $\mathbb{C}\in M_s(\R)$, has a unique interior equilibrium $x^* = -\mathbb{C}^{-1}\boldsymbol{r} \in \R^{s}_{>0}$. Then this equilibrium is globally attracting on $\R^{s}_{>0}$ if there exists a diagonal matrix $\mathbb{D}>0$ (every entry is positive) such that $\mathbb{C}\mathbb{D} + \mathbb{D}\mathbb{C}^T$ is negative definite.
\end{theorem}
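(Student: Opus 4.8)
The plan is to prove Theorem~\ref{thm:Goh} by exhibiting an explicit strict Lyapunov function of Volterra type on the open positive orthant and then invoking the global Lyapunov stability theorem. First I would record that $\mathbb{R}^s_{>0}$ is forward-invariant under the flow of $\dot{\mathbf{x}}=\mathbb{X}(\boldsymbol{r}+\mathbb{C}\mathbf{x})$: since $\dot{x}_i = x_i(r_i + \sum_j C_{ij}x_j)$ vanishes on the coordinate hyperplane $\{x_i=0\}$, each such hyperplane is invariant, so a solution with all positive components at $t=0$ keeps all components positive for $t\ge 0$. This reduces the problem to the open orthant, where $\mathbf{x}^* = -\mathbb{C}^{-1}\boldsymbol{r}$ lives by hypothesis.

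Writing $\mathbb{D} = \mathrm{diag}(d_1,\dots,d_s)$ with every $d_i>0$, I would take
\begin{equation}
V(\mathbf{x}) := \sum_{i=1}^{s}\frac{1}{d_i}\left(x_i - x_i^* - x_i^*\ln\frac{x_i}{x_i^*}\right),\qquad \mathbf{x}\in\mathbb{R}^s_{>0}.
\end{equation}
Each summand is a strictly convex function of $x_i$ that vanishes with its first derivative at $x_i = x_i^*$ and diverges to $+\infty$ as $x_i\to 0^+$ and as $x_i\to+\infty$; hence $V$ is $C^1$, positive definite relative to $\mathbf{x}^*$, and \emph{proper} on the \emph{open} orthant, i.e.\ every sublevel set $\{V\le\kappa\}$ is a compact subset of $\mathbb{R}^s_{>0}$ bounded away from $\partial\mathbb{R}^s_{\ge 0}$. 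I expect this properness on an open domain to be the main point requiring care, since it is exactly what prevents a trajectory from escaping to infinity or limiting onto the boundary where some species goes extinct.

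Next I would differentiate $V$ along solutions. Using the equilibrium identity $\boldsymbol{r} = -\mathbb{C}\mathbf{x}^*$ and setting $\mathbf{y} := \mathbf{x}-\mathbf{x}^*$, a short computation gives
\begin{equation}
\dot V \;=\; \sum_{i=1}^{s}\frac{1}{d_i}(x_i-x_i^*)\sum_{j=1}^{s}C_{ij}(x_j-x_j^*) \;=\; \mathbf{y}^{T}\mathbb{D}^{-1}\mathbb{C}\,\mathbf{y} \;=\; \tfrac12\,\mathbf{y}^{T}\bigl(\mathbb{D}^{-1}\mathbb{C}+\mathbb{C}^{T}\mathbb{D}^{-1}\bigr)\mathbf{y}.
\end{equation}
Now $\mathbb{D}^{-1}\mathbb{C}+\mathbb{C}^{T}\mathbb{D}^{-1} = \mathbb{D}^{-1}\bigl(\mathbb{C}\mathbb{D}+\mathbb{D}\mathbb{C}^{T}\bigr)\mathbb{D}^{-1}$ is a congruence transform of $\mathbb{C}\mathbb{D}+\mathbb{D}\mathbb{C}^{T}$ by the symmetric invertible matrix $\mathbb{D}^{-1}$, and congruence preserves negative definiteness; by hypothesis $\mathbb{C}\mathbb{D}+\mathbb{D}\mathbb{C}^{T}$ is negative definite, so $\dot V<0$ for all $\mathbf{x}\neq\mathbf{x}^*$ and $\dot V = 0$ only at $\mathbf{x}^*$.

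Finally I would assemble the ingredients. Because $V$ is proper and $\dot V\le 0$, each sublevel set is a compact forward-invariant neighborhood of $\mathbf{x}^*$ inside $\mathbb{R}^s_{>0}$; every solution is therefore bounded and stays away from the boundary, so its $\omega$-limit set is a nonempty compact invariant subset of $\mathbb{R}^s_{>0}$ on which $\dot V\equiv 0$, forcing $\omega(\mathbf{x}) = \{\mathbf{x}^*\}$ (equivalently, apply LaSalle's invariance principle). Hence every trajectory starting in $\mathbb{R}^s_{>0}$ converges to $\mathbf{x}^*$, which is the claim. In the application of the excerpt one sets $\mathbb{C}=\mathbb{B}$, $\boldsymbol{r}=\mathbf{1}_s$, and $\mathbb{D}=\mathbb{I}$, so that $\mathbb{C}\mathbb{D}+\mathbb{D}\mathbb{C}^{T} = 2\mathbb{B}$ is negative definite whenever $\mathbb{B}$ is.
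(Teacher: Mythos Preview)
Your argument is correct and is precisely the classical Lyapunov-function proof due to Goh: the Volterra-type function weighted by $1/d_i$, the computation $\dot V = \tfrac12\,\mathbf{y}^{T}\mathbb{D}^{-1}(\mathbb{C}\mathbb{D}+\mathbb{D}\mathbb{C}^{T})\mathbb{D}^{-1}\mathbf{y}$, and the properness/LaSalle conclusion are all sound.

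Note, however, that the paper does \emph{not} supply its own proof of Theorem~\ref{thm:Goh}; it merely states Goh's theorem with a citation and then uses it to establish Corollary~\ref{cor:GlobalAttr} (that $\mathbb{B}_0=-(1-c)\mathbb{I}-c\mathbb{J}$ is negative definite, whence $\mathbf{x}^*_\emptyset$ is globally attracting when $m=0$). So there is no ``paper's proof'' to compare against for this particular statement---your write-up is a self-contained justification of a result the authors take as known. Your closing remark about setting $\mathbb{D}=\mathbb{I}$ in the application matches exactly how the paper invokes the theorem in the proof of the corollary.
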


Thus, we have the following:
\begin{cor} \label{cor:GlobalAttr}
The matrix $\mathbb{B}_0:=-(1-c)\mathbb{I} - c\mathbb{J}$ is negative definite. Thus, the equilibrium $\mathbf{x^*_\emptyset}$ is globally attracting equilibrium in $\R^{s}_{>0}$ for system~\eref{eq:LVNet}  when $m=0$ and $0\leq c<1$.
\end{cor}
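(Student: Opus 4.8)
The plan is to establish Corollary~\ref{cor:GlobalAttr} by first proving directly that $\mathbb{B}_0$ is symmetric and negative definite via an explicit spectral computation, and then invoking Goh's theorem (Theorem~\ref{thm:Goh}) with the trivial choice $\mathbb{D}=\mathbb{I}$.

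First I would exploit the block structure. Since $\mathbb{J}$ is block-diagonal with the all-ones matrices $\mathbf{J}_{n_p}$ and $\mathbf{J}_{n_a}$ on its diagonal, the matrix $\mathbb{B}_0 = -(1-c)\mathbb{I} - c\mathbb{J}$ is itself block-diagonal, with blocks of the form $-(1-c)\mathbb{I}_n - c\mathbf{J}_n$ for $n\in\{n_p,n_a\}$. Each all-ones matrix $\mathbf{J}_n$ has eigenvalue $n$ with eigenvector $\mathbf{1}_n$ and eigenvalue $0$ with multiplicity $n-1$; hence the block $-(1-c)\mathbb{I}_n - c\mathbf{J}_n$ is symmetric with eigenvalues $-(1-c)-cn = -1-c(n-1)$ (simple) and $-(1-c)$ (multiplicity $n-1$). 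For $0\le c<1$ both values are strictly negative, so every block, and therefore $\mathbb{B}_0$, is symmetric negative definite.

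Next I would record the interior equilibrium explicitly. Because $\mathbb{B}_0$ is negative definite it is invertible, so $\mathbf{x^*_\emptyset} = -\mathbb{B}_0^{-1}\mathbf{1}_s$ is the unique equilibrium of system~\eqref{eq:LVNet} (at $m=0$) with all components nonzero. By the permutation symmetry within each group, its plant components are all equal to some $u$ and its animal components are all equal to some $v$; solving one row of $\mathbb{B}_0\mathbf{x^*_\emptyset}=-\mathbf{1}_s$ inside each block gives $u = 1/\big((1-c)+c\,n_p\big)>0$ and $v = 1/\big((1-c)+c\,n_a\big)>0$, so $\mathbf{x^*_\emptyset}\in\R^s_{>0}$, as required for the hypotheses of Theorem~\ref{thm:Goh}.

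Finally, taking $\mathbb{D}=\mathbb{I}$ in Goh's theorem, we have $\mathbb{B}_0\mathbb{D}+\mathbb{D}\mathbb{B}_0^T = 2\mathbb{B}_0$, which is negative definite by the spectral computation above (here symmetry of $\mathbb{B}_0$ is what makes the $\mathbb{D}=\mathbb{I}$ choice sufficient). Theorem~\ref{thm:Goh} then immediately yields that $\mathbf{x^*_\emptyset}$ is globally attracting on $\R^s_{>0}$. There is no substantive obstacle in this argument: the only points requiring care are verifying that the eigenvalue $-(1-c)-cn$ remains strictly negative over the whole admissible range (including the endpoint $c=0$), and noting that the Lyapunov-type condition of Goh's theorem collapses to plain negative definiteness of $\mathbb{B}_0$ precisely because $\mathbb{B}_0$ is symmetric.
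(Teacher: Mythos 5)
Your proposal is correct, and its overall skeleton matches the paper's: block-diagonal structure of $\mathbb{B}_0$, negative definiteness, then Goh's theorem with $\mathbb{D}=\mathbb{I}$ (so that the condition reduces to negative definiteness of $2\mathbb{B}_0$ by symmetry). The one step you handle differently is the proof of negative definiteness itself: the paper evaluates the quadratic form directly, writing $\mathbf{x}_p^T\bigl[(1-c)\mathbf{I}_{n_p}+c\mathbf{J}_{n_p}\bigr]\mathbf{x}_p=(1-c)\sum_i x_i^2+c\bigl(\sum_i x_i\bigr)^2>0$ for each block, whereas you diagonalize each block using the spectrum of the all-ones matrix, obtaining eigenvalues $-(1-c)$ and $-1-c(n-1)$. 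Both are elementary and valid; your spectral route has the small bonus of showing explicitly how definiteness degenerates only as $c\to1$ (the eigenvalue $-(1-c)\to 0$), which connects nicely to the degeneracy of the point $(c,m)=(1,0)$ discussed in Sec.~\ref{sec:BifurcationDiagram}. You also do something the paper's proof skips: you verify the hypothesis of Theorem~\ref{thm:Goh} that the interior equilibrium actually lies in $\R^s_{>0}$, by solving $\mathbb{B}_0\mathbf{x}^*_\emptyset=-\mathbf{1}_s$ blockwise to get the components $u=1/\bigl((1-c)+c\,n_p\bigr)$ and $v=1/\bigl((1-c)+c\,n_a\bigr)$, both positive. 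The paper leaves feasibility of $\mathbf{x}^*_\emptyset$ at $m=0$ implicit, so this is a genuine (if minor) improvement in completeness rather than a deviation in method.
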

\begin{proof}
  For an arbitrary vector $\mathbf{x}=(x_1,x_2,...,x_s)\in \R^s$, it follows from
  \begin{equation}
    \mathbb{B}_0 = \begin{pmatrix}
      -(1-c)\mathbf{I}_{n_p} - c \mathbf{J}_{n_p} & \mathbf{0}_{n_p,n_a} \\[1mm]
      \mathbf{0}_{n_a,n_p} & -(1-c)\mathbf{I}_{n_a} - c \mathbf{J}_{n_a}
    \end{pmatrix}
  \end{equation}
that $\mathbf{x}^T\mathbb{B}_0\mathbf{x}<0$ (negative definiteness) is equivalent to $$\mathbf{x}_{p}^T\lp[(1-c)\mathbf{I}_{n_p} + c \mathbf{J}_{n_p}\rp]\mathbf{x}_{p}+\mathbf{x}_a^T\lp[(1-c)\mathbf{I}_{n_a} + c \mathbf{J}_{n_a}\rp]\mathbf{x}_{a}>0,$$
where $\mathbf{x}_{p}=(x_1,x_2,...,x_{n_p})$ and $\mathbf{x}_{a}=(x_{n_p+1},x_{n_p+2},...,x_{n_a})$. Notice that
\begin{align}
  \mathbf{x}_{p}^T\lp[(1-c)\mathbf{I}_{n_p} + c \mathbf{J}_{n_p}\rp]\mathbf{x}_{p} &= (1-c) \mathbf{x}_{p}^T \mathbf{x}_{p} +c \mathbf{x}_{p}^T \mathbf{J}_{n_p} \mathbf{x}_{p} \nonumber \\ 
  &= (1-c) \sum^{n_p}_{i=1} x^2_i + c \sum^{n_p}_{i=1}\sum^{n_p}_{j=1} x_ix_j = (1-c) \sum^{n_p}_{i=1} x^2_i + c \lp( \sum^{n_p}_{i=1} x_i\rp)^2.\label{eq:IneqGoh1}
\end{align}
It follows that Eq.~\eref{eq:IneqGoh1} is strictly positive since $0 \leq c<1$. By a symmetrical argument, it is clear that $\mathbf{x}_a^T\lp[(1-c)\mathbf{I}_{n_a} + c \mathbf{J}_{n_a}\rp]\mathbf{x}_{a}$ is also strictly positive, thus we have that
$$\mathbf{x}_{p}^T\lp[(1-c)\mathbf{I}_{n_p} + c \mathbf{J}_{n_p}\rp]\mathbf{x}_{p}+\mathbf{x}_a^T\lp[(1-c)\mathbf{I}_{n_a} + c \mathbf{J}_{n_a}\rp]\mathbf{x}_{a}>0$$
holds for any arbitrary vector $\mathbf{x}$; hence, $\mathbb{B}_0$ is negative definite. Now by making $\mathbb{D}=\mathbb{I}$ and $\mathbb{C}=\mathbb{B}_0$ then, by Goh's theorem, $\mathbf{x^*_\emptyset}$ is globally attracting equilibrium in $\R^{s}_{>0}$ for system~\eref{eq:LVNet}  when $m=0$ and $0 \leq c<1$.
\end{proof}

\section{Sets of extinction species $\mathcal{J}_1, \mathcal{J}_2$, and $\mathcal{J}_3$}\label{app:tabSets}
In this section, we provide in Table~\ref{tab:J1andJ2} and ~\ref{tab:J3} the list of extinct species $\mathcal{J}_1, \mathcal{J}_2$, and $\mathcal{J}_3$ for the  equilibria studied in the main text. 

\begin{table*} 
\begin{tabular}{|l|c|}
\hline
\multicolumn{1}{|c|}{} & 6, 13, 14, 17, 18, 19, 21, 24, 25, 27, 28, 29, 30, 31, 32, 34, 35,  36, 37, 38, 39, 40, 41, 42, 43, 50,  51,  52,  54,      \\
$\mathcal{J}_1$                     & 57,  58,  59,  60, 61,  62,  63,  64,  65,  66,  67,  70,  72,  73,  74,  75,  76,  77,  80,  82,  84,  85,  86,  87,  88,  89,  90,  \\
                       & 91,  92,  93,  94,  95,  96, 97,  98,  99, 100, 102, 104, 105, 106, 107            \\ \hline
                       & 13, 14, 18, 19, 22, 24, 25, 26, 27, 28, 30, 32, 34, 35, 36, 37, 38, 39, 40, 41, 42, 43, 51,  56,  57, 60,  62,  67,  68,   69,              \\
$\mathcal{J}_2$                     & 71,  72,  73,  74, 75,  76,  77,  78,  79,  80,  81,  82,  83,  84,  85,  86,  88, 89,  90,  92,  93,  94,  95,  96,  97,  99, 100, 101,         \\
                       & 102, 103,  104, 105, 106, 107          \\ \hline
\end{tabular}
\caption{Sets of extinct species $\mathcal{J}_1$ and $\mathcal{J}_2$ of the equilibria shown in \fref{fig:TopSketchHopf}.
}\label{tab:J1andJ2}
\end{table*}

\begin{table*}
\begin{tabular}{|l|c|}
\hline
\multicolumn{1}{|c|}{} & 3,  5,  7,  8,  9, 10, 11, 12, 13, 14, 15, 16, 17, 18, 19, 20, 21, 22, 23, 24, 25, 26, 27, 28, 2931,  34,  39,  41,  42,  43,  46,  47,  48,  49, \\
$\mathcal{J}_3$                     & 51,  53,  54, 55,  56,  57,  60,  61,  62,  63,  64,  65,  66,  67,  69,  70, 72,  73,  74,  75,  76,  77,  78,  79,  80,  81,  82,  83,  84,     \\
                       & 85,  86,  87,  88,  89,  90,  91,  92,  93,  94,  95,  96,  97,98,  99, 100, 101, 102, 103, 104, 105, 106, 107, 108, 109, 110                     \\ \hline
\end{tabular}
\caption{Sets of extinct species $\mathcal{J}_3$ of the equilibrium exhibiting the degenerate transcritical bifurcation for network $A_2$ at $c=0.4$ and $m = 0.3$ shown in \fref{fig:SketchMA}}\label{tab:J3}
\end{table*}

\section{Additional equilibria arising through Hopf bifurcations in network $A_1$}\label{app:ExtraHopf}
In this section, we present the persistence diagram of two additional attracting equilibria that emerge from Hopf bifurcations in network $A_1$. Both families of equilibria shown in \fref{fig:BifNetwork3-4} exist for a narrow range of $m$ and disappear at infinity as $m$ increases.  

\begin{figure}
\centering
\includegraphics{./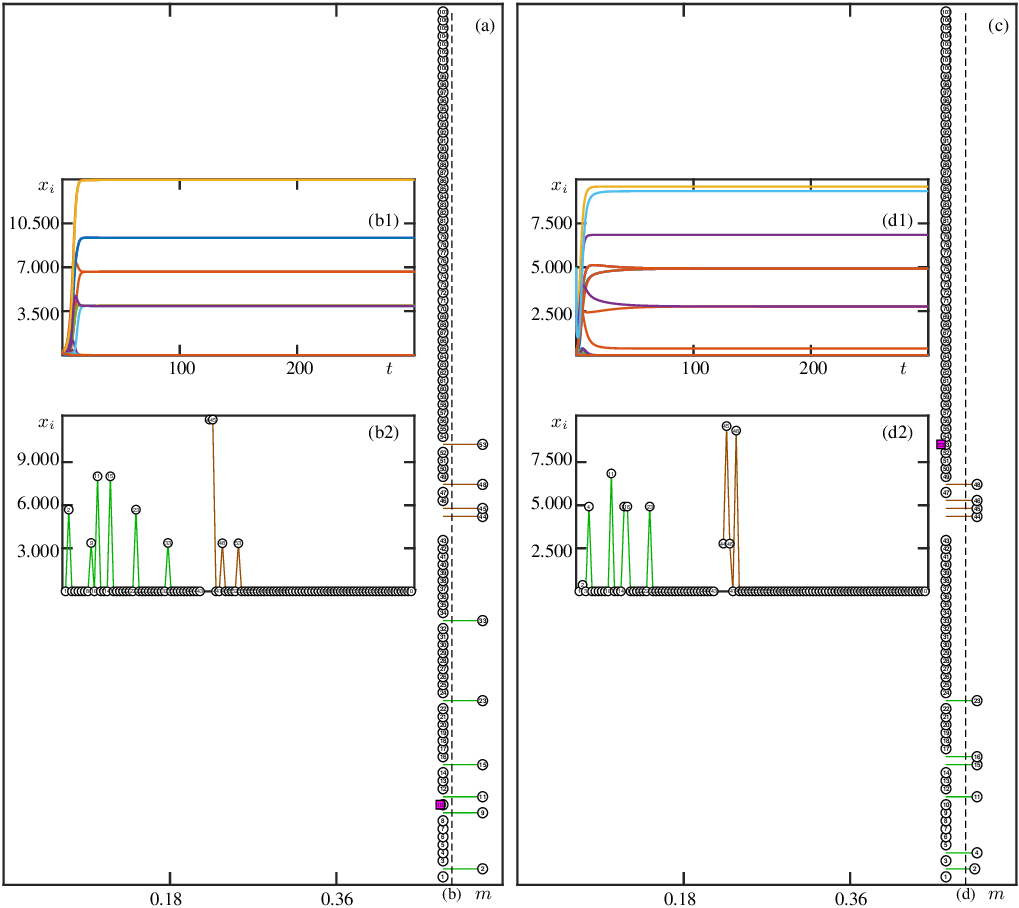}
\caption{
 Additional multistability driven by Hopf bifurcations
in network $A_1$.
Panels (a) and (c) show the persistence diagrams computed starting from unfeasible equilibria, $\mathbf{x}^*_{\mathcal{J}_4}$ and $\mathbf{x}^*_{\mathcal{J}_5}$ appearing in \fref{fig:BifPlaneA1}, respectively, that become stable through Hopf bifurcations. The purple squares indicate the species whose abundances are negative until transcritical bifurcations occur turning feasible equilibria stable.
The vertical dashed lines in panels~(a) and (c) indicate the values of $m$ selected to create panels~(b1), (b2),(d1), and (d2). Panels~(b1) and (d1) show the time trajectories of the abundances $x_i$'s for the initial conditions converging to the indicated equilibria in panels~(a) and (c), respectively, while panels~(b2) and (d2) illustrate the corresponding final abundances of species. } \label{fig:BifNetwork3-4} 
\end{figure}

\section{Persistence diagrams displaying degenerate transcritical bifurcations in network $A_2$}
\label{app:Persistence_degenerate}

In this section, we present the persistence diagrams displaying a degenerate transcritical bifurcation in network $A_2$.

\begin{figure*}
\centering
\includegraphics{./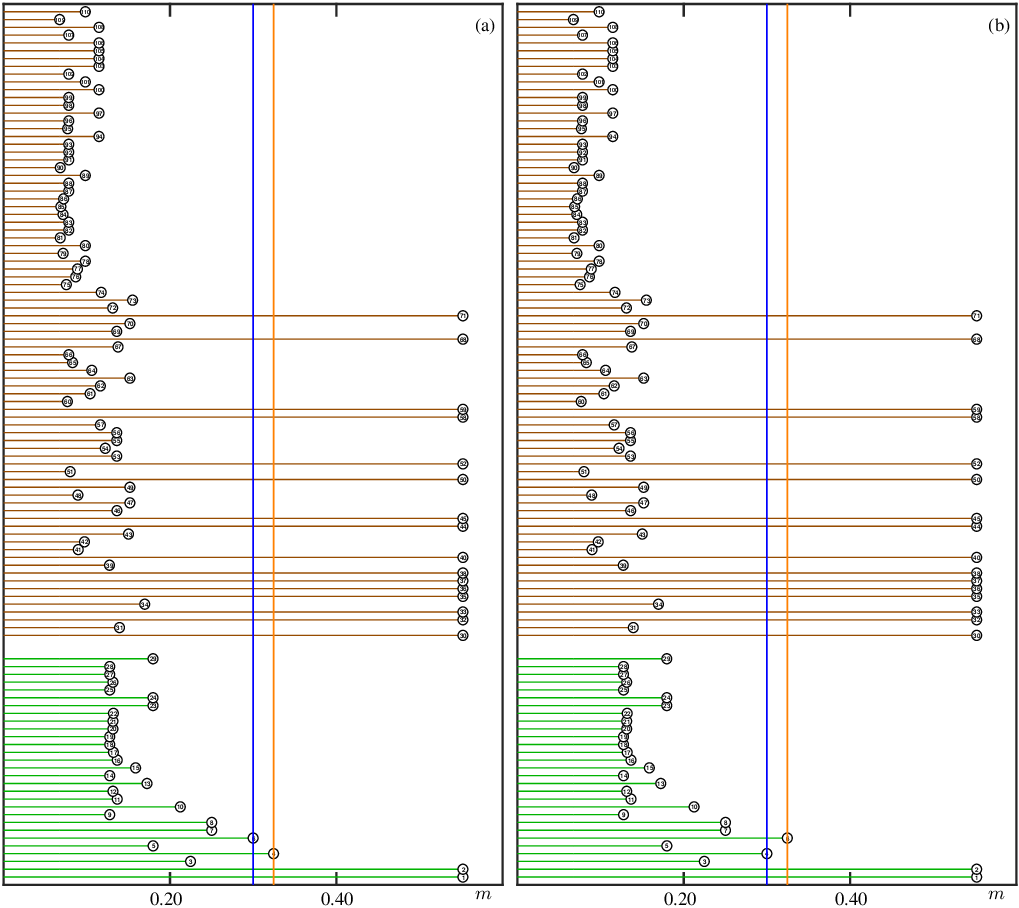}
\caption{Two different persistence diagrams, (a) and (b),  at $c=0.4$ in network $A_2$. They are computed starting from the full-coexistence equilibrium $\mathbf{x_\emptyset^*}$ at $m=0$. The vertical lines are drawn at $m=m^{\rm (d)}$ (blue) and $m=m_{\mathcal{J}_3\cup\{4,6\}}$ (orange), representing the degenerate transcritical bifurcation that creates multistability, and the transcritical bifurcation that terminates it as described in Sec.~\ref{sec:degTrans}.
} \label{fig:BifMulti} 
\end{figure*}

\end{document}